\tikzset{every picture/.style={line width=0.75pt}}
\newcommand{\restrict}[1]{\raise-.5ex\hbox{\ensuremath|}_{#1}}
\newcommand{\dd}[1]{\,\mathrm{d}#1}
\newcommand{\N}{\mathbb{N}}
\newcommand{\Z}{\mathbb{Z}}
\newcommand{\Q}{\mathbb{Q}}
\newcommand{\R}{\mathbb{R}}
\newcommand{\C}{\mathbb{C}}
\newcommand{\PPrb}[1]{\mathbb{P}\left[ #1\right]}
\newcommand{\bigo}{\mathcal{O}}
\renewcommand{\epsilon}{\varepsilon}
\newcommand{\norm}[1]{\left\| #1\right\|}
\newcommand{\abs}[1]{\left| #1\right|}
\newcommand{\floor}[1]{\left\lfloor #1 \right\rfloor}
\newcommand{\ceil}[1]{\left\lceil #1 \right\rceil}
\newcommand{\cc}[1]{\overline{#1}}
\newcommand{\hp}[2]{\left\langle #1 ,#2\right\rangle}
\newcommand{\id}{\mathrm{\,Id}}
\newcommand{\1}{\mathds{1}}
\newcommand{\vect}{\operatorname{Vect}}
\newcommand{\domain}[1]{\ifthenelse{\equal{#1}{}}{\mathcal{D}}{\mathcal{D}\left(#1\right)}}
\theoremstyle{plain}
\newtheorem{Theorem}{Theorem}
\newtheorem{Lemma}[Theorem]{Lemma}
\newtheorem{Corollary}[Theorem]{Corollary}
\newtheorem{Conjecture}{Conjecture}
\theoremstyle{definition}
\theoremstyle{remark}
\newtheorem*{Remarks}{Remarks}
\begin{document}
\title{Sharp Bounds for the Integrated Density of States of a Strongly Disordered 1D Anderson-Bernoulli Model}
\author{Daniel S\'anchez-Mendoza}
\email{dsanchezmendoza@unistra.fr}
\affiliation{Université de Strasbourg, Institut de Recherche Mathématique Avancée UMR 7501, F-67000 Strasbourg, France.}
\date{July, 2021}
\keywords{Anderson model, Bernoulli distribution, Integrated density of states, Lifschitz tails.}

\begin{abstract}
In this article we give upper and lower bounds for the integrated density of states (IDS) of the 1D discrete Anderson-Bernoulli model when the disorder is strong enough to separate the two spectral bands. These bounds are uniform on the disorder and hold over the whole spectrum. They show the existence of a sequence of energies in which the value of the IDS can be given explicitly and does not depend on the disorder parameter.
\end{abstract}

\maketitle

\section{Introduction and Results}
Bounds found in the literature for the integrated density of states (IDS) of the Anderson model on $\Z^d$ are usually given in order to prove Lifschitz tails. Naturally, the greater the generality on the dimension and the potential distribution the less precise and explicit the bounds. The most general proof of Lifschitz tails at the bottom of the spectrum was given by Simon \cite{Simon} using Dirichlet-Neumann bracketing. Proper to the 1D case, Schulz-Baldes \cite{Schulz} worked on periodic infinite Jacobi matrices plus an independent identically distributed (i.i.d.) random potential with a discrete distribution employing the transfer matrix approach. Not only did he prove Lifschitz tails at all spectral band edges, but he also gave the Lifschitz constant (defined as in Ref. \onlinecite[Eq. 4.45]{Aizenman}) as a function of the potential distribution and the IDS of the free Jacobi matrix. More recently, in Ref. \onlinecite{Bishop} the authors obtained the value of the Lifschitz constant at the bottom of the spectrum of the 1D Anderson-Bernoulli model by approximating eigenfunctions associated to low eigenvalues with sine waves supported on the 0's of the potential. Since the goal of these articles, and many others for that matter, was to prove the existence of Lifschitz tails, the bounds for the IDS they obtained only hold near the edges of the bands, and not in the bulk.

In this article we derive bounds for the IDS of the 1D Anderson-Bernoulli model that hold throughout the first band of the spectrum and do not depend on the disorder parameter, as long as there are two disjoint spectral bands. The upper and lower bounds are sharp in the sense that they coincide on a countably infinite set, giving rise to two sequences of ``special" energies at which the value of the IDS is completely explicit and independent of the disorder parameter. The bounds also give the Lifschitz behaviour, as well as the rate at which the IDS converges uniformly to the IDS of the free Laplacian when the Bernoulli parameter goes to $0$.

The random Schrodinger operator that concerns us is
\begin{align*}
H_{p,\zeta}\coloneqq-\Delta+\zeta V_p:\ell^2(\N)&\longrightarrow\ell^2(\N)\\
\phi&\longmapsto (H_{p,\zeta}\phi)(j)\coloneqq\left[2\phi(j)-\phi(j+1)-\phi(j-1)\right]+\zeta V_p(j)\phi(j),
\end{align*}
where the Laplacian has the Dirichlet boundary condition $\phi(0)=0$ and the potential $\{V_p(j)\}_{j\in\N}$ is an i.i.d. sequence of random variables defined over a probability space $(\Omega,\mathcal{F},\mathbb{P})$ following a non-degenerate Bernoulli($p$) distribution ($\PPrb{V_p(j)=1}=p=1-\PPrb{V_p(j)=0}$). We chose to define $H_{p,\zeta}$ on $\N$ and not $\Z$ since it makes no difference on the IDS.

We restrict ourselves to the case where the disorder parameter $\zeta$ is at least $4=\norm{-\Delta}$ (for the weak disorder regime we refer to Refs. \onlinecite{Campanino,Jitomirskaya}). Since the almost sure spectrum of $H_{p,\zeta}$ is
\begin{equation*}
    \sigma(H_{p,\zeta})=[0,4]+\{0,\zeta\}=[0,4]\cup[\zeta,\zeta+4],\qquad \mathbb{P}\text{-a.s.},
\end{equation*}
the condition $\zeta\geq4$ guaranties no overlap, except possibly for a single point, between $[0,4]$ and $[\zeta,\zeta+4]$. We will refer to the previous intervals as the first and second band of the spectrum, even for $\zeta=4$ when there is no actual spectral gap.

For the IDS  of $H_{p,\zeta}$, denoted $I_{p,\zeta}$, we use the eigenvalue counting definition (there are other equivalent ones)
\begin{equation*}
   I_{p,\zeta}(x)\coloneqq\lim_{L\to \infty}\frac{1}{L}\#\left\{\lambda\in\sigma\left(H_{p,\zeta}\restrict{\ell^2\left(\{1,\ldots,L\}\right)}\right)\,\middle| \,\lambda\leq x\right\},\qquad\mathbb{P}\text{-a.s.},
\end{equation*}
and recall that $I_{p,\zeta}$ is a non-random continuous \cite{Delyon} function.

Before stating our main result we set some notation. The floor and ceiling functions are denoted $\floor{\cdot}$ and $\ceil{\cdot}$ respectively, $\N=\{1,2,\ldots\}$, $\N_0=\N\cup\{0\}$ and
\begin{equation*}
    \beta(x)\coloneqq\frac{\pi}{2\arcsin\left(\sqrt{x}/2\right)},\qquad x\in(0,4].
\end{equation*}

\begin{Theorem}
Let $\zeta\geq 4$, then
\begin{align}
p\sum_{k=1}^\infty (1-p)^{\ceil{k\beta(x)}-1}&\leq\,I_{p,\zeta}(x)\leq p\sum_{k=1}^\infty (1-p)^{\floor{k\beta(x)}-1},&x\in(0,2],\label{B1}\\
p\sum_{k=1}^\infty(1-p)^{\ceil{k\beta(x)}}&\leq(1-p)-I_{p,\zeta}(4-x)\leq p\sum_{k=1}^\infty (1-p)^{\floor{k\beta(x)}},&x\in(0,2].\label{B2}
\end{align}
\end{Theorem}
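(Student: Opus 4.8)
Throughout write $\beta=\beta(x)$ and let $N^{D}_{x}(\ell)$ denote the number of eigenvalues $\le x$ of the free Dirichlet Laplacian on $\ell$ consecutive sites. Its eigenvalues are $4\sin^{2}\!\big(\tfrac{\pi m}{2(\ell+1)}\big)$, $m=1,\dots,\ell$, so for $x\in(0,2]$ (where $\beta\ge 2$) one has $N^{D}_{x}(\ell)=\floor{(\ell+1)/\beta}$, and they are symmetric about $2$. The plan is to sandwich $H_{p,\zeta}$ on a window $\{1,\dots,L\}$ between operators that \emph{decouple at the sites where $V_{p}=1$} (call these the \emph{barriers}; the maximal runs of $0$'s between them are the \emph{blocks}), to pass to densities via the ergodic theorem — the number of blocks of length $\ell$ in the window divided by $L$ tends $\mathbb{P}$-a.s.\ to $p^{2}(1-p)^{\ell}$ — and finally to rewrite the resulting series by the elementary identity $\floor{n/\beta}=\#\{k\ge1:\ceil{k\beta}\le n\}$ (for $n\in\N$), which after a geometric summation turns $\sum_{\ell\ge1}p^{2}(1-p)^{\ell}\floor{(\ell+1)/\beta}$ into $p\sum_{k\ge1}(1-p)^{\ceil{k\beta}-1}$, and analogously for the three companion series. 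That $I_{p,\zeta}$ is continuous and non-random lets one work with generic $x$ and then pass to the limit.

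For the lower bound in \eqref{B1} I would restrict the quadratic form of $H_{p,\zeta}\restrict{\{1,\dots,L\}}$ to the subspace of $\phi$ vanishing at every barrier. There the potential term drops out and the form is exactly that of $\bigoplus_{\text{blocks}}(-\Delta)^{D}$, the direct sum of free Dirichlet Laplacians on the blocks; by the min-max principle the $n$-th eigenvalue of $H_{p,\zeta}$ is at most the $n$-th eigenvalue of this direct sum for every $n$ up to the number of $0$-sites (which already exhausts the direct sum), hence $\#\{\lambda\in\sigma(H_{p,\zeta}\restrict{\{1,\dots,L\}}):\lambda\le x\}\ge\sum_{\text{blocks}}N^{D}_{x}(\ell)$. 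Dividing by $L$, letting $L\to\infty$ and applying the ergodic theorem (the interchange with the sum over $\ell$ being controlled by $N^{D}_{x}(\ell)\le\ell$ together with $\sum_{\ell}\ell p^{2}(1-p)^{\ell}<\infty$) gives $I_{p,\zeta}(x)\ge\sum_{\ell\ge1}p^{2}(1-p)^{\ell}\floor{(\ell+1)/\beta}=p\sum_{k\ge1}(1-p)^{\ceil{k\beta}-1}$.

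The upper bound in \eqref{B1} is the crux and cannot be read off a naive decoupling: cutting the bonds at the barriers produces \emph{Neumann} Laplacians on the blocks, whose spurious zero modes overcount by one per block, whereas the Dirichlet decoupling above undercounts. I would count eigenvalues by discrete Sturm oscillation: for generic $x$, $\#\{\lambda\in\sigma(H_{p,\zeta}\restrict{\{1,\dots,L\}}):\lambda\le x\}$ equals the number of sign changes of the transfer solution $u(x)$ (with $u(0)=0$, $u(1)=1$). Inside a block the recursion is the free one, so $u$ is a sampled sine advancing by the fixed angle $\pi/\beta(x)$ per site; across a barrier the transfer matrix is $\left(\begin{smallmatrix}2+\zeta-x&-1\\1&0\end{smallmatrix}\right)$, which for $\zeta\ge 4\ge x$ has trace $\ge 2$, hence two positive real eigenvalues, and therefore moves the Prüfer phase by \emph{strictly less than one step} $\pi/\beta(x)$ — this is the only place the hypothesis $\zeta\ge 4$ is used. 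A careful block-by-block bookkeeping of the sign changes (a block of length $\ell$ together with its two flanking barriers accumulates strictly less phase than a free stretch of length $\ell+2$, so at most $\ceil{(\ell+2)/\beta}-1$ sign changes can be imputed to it) then gives $\#\{\lambda\le x\}\le\sum_{\text{blocks}}\big(\ceil{(\ell+2)/\beta}-1\big)+o(L)$, and the density limit plus the combinatorial identity yield $I_{p,\zeta}(x)\le p\sum_{k\ge1}(1-p)^{\floor{k\beta}-1}$. Making this barrier bookkeeping rigorous and uniform in $\zeta\ge4$ is the main obstacle; an alternative is a tailored min-max argument with auxiliary operators on the ``block $+$ flanking barriers'' stretches, again exploiting $\zeta\ge4$ to bound the evanescent penetration of low-energy eigenfunctions into the barriers.

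Finally, \eqref{B2} comes from \eqref{B1} via the reflection $\mu\mapsto 4-\mu$, which acts within each block since the Dirichlet-Laplacian spectra are symmetric about $2$. One first needs $I_{p,\zeta}(4)=1-p$: since $V_{p}\overset{d}{=}1-V_{1-p}$, conjugating $-\Delta$ by $\phi(j)\mapsto(-1)^{j}\phi(j)$ yields the identity $I_{1-p,\zeta}(y)=1-I_{p,\zeta}(4+\zeta-y)$, and reading it at $y=4$ together with the already-proved lower bounds (which force $I_{q,\zeta}(4)\ge 1-q$ for every parameter $q$) pins down $I_{p,\zeta}(4)=1-p$ for $\zeta>4$, hence by continuity in $\zeta$ also at $\zeta=4$; thus $(1-p)-I_{p,\zeta}(4-x)$ is the density of eigenvalues of $H_{p,\zeta}$ in $(4-x,4]$. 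The upper bound in \eqref{B2} then follows from the same comparison as in the second paragraph read ``from the top'': $\bigoplus_{\text{blocks}}(-\Delta)^{D}$ has spectrum symmetric about $2$ and the same first-band mass $1-p$ as $H_{p,\zeta}$ while underestimating its counting function, and its density of eigenvalues in $(4-x,4]$ equals $\sum_{\ell}p^{2}(1-p)^{\ell}\big(\ceil{(\ell+1)/\beta}-1\big)=p\sum_{k}(1-p)^{\floor{k\beta}}$, which therefore dominates $(1-p)-I_{p,\zeta}(4-x)$. The lower bound in \eqref{B2} is obtained by running the Sturm (or min-max) count of the third paragraph at the energy $4-x$ instead of $x$ — the barrier transfer matrix still advances the phase by less than one step there (indeed vanishingly little as $\zeta-(4-x)\to0$) — and then using the identity $\beta(4-x)=\beta(x)/(\beta(x)-1)$ and the combinatorial rearrangement to recognize the outcome as $p\sum_{k}(1-p)^{\ceil{k\beta(x)}}$.
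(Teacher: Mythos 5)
Your architecture (decouple at the barriers, pass to block-length densities, resum combinatorially) is exactly the paper's, and two of the four inequalities are essentially complete: the lower bound in \eqref{B1} (your min-max restriction to functions vanishing at the barriers is the paper's Cauchy-interlacing step) and the upper bound in \eqref{B2} (lower bound at $4-x$ plus the reflection $U$). Your per-block targets and the resummation identities all check out. But the two remaining inequalities are where the real content lies, and both have gaps. For the upper bound in \eqref{B1}, the key lemma of your oscillation argument --- that a block of length $\ell$ together with its two flanking barriers produces at most $\ceil{(\ell+2)/\beta}-1$ sign changes --- is asserted, not proved; ``trace $\geq 2$'' gives hyperbolicity but not, by itself, that the phase advance per barrier site is at most one free step in coordinates where crossings of $\pi$ count eigenvalues, and you explicitly defer this bookkeeping. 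The paper's route supplies precisely the missing device: it first shows that \emph{doubling} each barrier site while halving $\zeta$ produces a smaller operator (via an explicit embedding $T$ of $\ell^2(\{1,\dots,L_n\})$ into $\ell^2(\{1,\dots,L_n+n\})$ that repeats the value at each barrier, together with min-max), so that each barrier can be shared between its two neighbouring blocks; Neumann bracketing of the doubled operator then isolates ``block $+$ two barrier copies,'' and $\zeta\geq4$ enters only through $-\Delta^N_{\ell+2}+\tfrac{\zeta}{2}A_{\ell+2}\geq-\Delta^D_{\ell+2}$. Your ``alternative min-max with auxiliary operators on block $+$ flanking barriers'' gestures at this but omits the doubling/sharing idea, without which the flanking barriers are each claimed by two blocks and the comparison operator does not sit below $H_{p,\zeta}$.

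The lower bound in \eqref{B2} as you propose it does not close. Running your block-plus-two-barriers count at energy $4-x$ yields $I_{p,\zeta}(4-x)\leq p\sum_k(1-p)^{\floor{k\beta(4-x)}-1}$, i.e.\ the analogue of the paper's $I_p^{<,*}(4-x)$, and hence $(1-p)-I_{p,\zeta}(4-x)\geq(1-p)-I_p^{<,*}(4-x)$. This is \emph{strictly weaker} than the claimed $p\sum_k(1-p)^{\ceil{k\beta(x)}}$ for generic $x$: the paper points out that on $[2,4]$ one has $I_p^{<,N}\leq I_p^{<,*}$, and only $(1-p)-I_p^{<,N}(4-x)$ equals the desired series. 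The correct tool here is exactly the ``naive'' Neumann decoupling you dismissed at the outset: cutting the bonds around each single barrier site gives $\bigl(\bigoplus_i-\Delta^N_{Y_i}\bigr)\oplus\zeta\operatorname{Id}_n\leq -\Delta_{L_n}+\zeta V_p$, and since each isolated barrier contributes the eigenvalue $\zeta\geq4>4-x$, counting eigenvalues $<4-x$ gives $I_{p,\zeta}(4-x)\leq I_p^{<,N}(4-x)$; the Neumann ``spurious zero modes'' that overcount at the bottom of the band are precisely what is needed when counting from the top. So you should keep that decoupling for \eqref{B2} rather than recycling the sharper block-plus-barriers count.
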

\begin{Remarks}\leavevmode
\begin{enumerate}
\item The right and left sides of \eqref{B1} and \eqref{B2} do not depend on $\zeta$. The functions $\floor{\cdot}$ and $\ceil{\cdot}$ make them discontinuous at every $x$ such that $\beta(x)\in\Q$.
\item Both \eqref{B1} and \eqref{B2} hold on $(0,4)$, we chose to state them on $(0,2]$ because after $x=2$ they are not optimal bounds. This will be better explained at the end of the proof (see \eqref{B3}).
\end{enumerate}
\end{Remarks}

The proof of Theorem 1 consist of constructing new operators that bound $H_{p,\zeta}$ from above and below by subdividing the underling lattice. This subdividing technique, known as Dirichlet-Neumann  bracketing, was used by Simon \cite{Simon} to prove Lifschitz tails for the Anderson model in $\Z^d$. He subdivided the lattice in cubes of size $c\,x^{-1/2}$ for some small $x$ and then used probabilistic bounds on the smallest eigenvalue of the Anderson model operator restricted to such cubes to obtain bounds on the IDS at $x$. In contrast, we follow Ref. \onlinecite{Bishop} by subdividing according to the connected components of $0$'s of the potential. With this way of subdividing, the resulting operators have explicit eigenvalues, which in turn allows for an explicit computation of their IDS at every energy $x$. We will derive one lower and two upper bounds of $I_{p,\zeta}$. The lower bound is given by the Cauchy Eigenvalue Interlacing Theorem as we delete all points with positive potential.  The two upper bounds follow from two applications of the Neumann part of Dirichlet-Neumann bracketing. For the first one, we apply it directly to (the finite volume restriction of) $H_{p,\zeta}$ in order to decouple all the connected components of zeros of the potential. For the second one, we first show that doubling every positive potential point and halving the disorder parameter $\zeta$ results in a lower operator to which the Neumann part of Dirichlet-Neumann bracketing is then applied in order to decouple every connected component of zeros of the potential together with its two positive potential boundary points.

The first consequence of Theorem 1 comes from the fact that whenever $\beta(x)\in\N$ the right and left sides of \eqref{B1} coincide, and of course, do not depend on $\zeta$ (the same can be said for \eqref{B2}). Simplifying the resulting geometric series we obtain:
\begin{Corollary}
Let $\zeta\geq 4$, then for $n\in\N\setminus\{1\}$
\begin{equation*}
I_{p,\zeta}(\beta^{-1}(n))=\frac{p}{(1-p)}\frac{(1-p)^n}{1-(1-p)^n},\qquad I_{p,\zeta}(4-\beta^{-1}(n))=(1-p)-\frac{p(1-p)^n}{1-(1-p)^n}.
\end{equation*}
\end{Corollary}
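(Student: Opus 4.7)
The plan is to apply Theorem 1 directly at the specific value $x = \beta^{-1}(n)$ for $n \in \N \setminus \{1\}$. First I will confirm that this point lies in $(0,2]$: since $\arcsin(\sqrt{x}/2)$ increases strictly from $0$ to $\pi/4$ as $x$ runs over $(0,2]$, $\beta$ is strictly decreasing on $(0,2]$ with $\beta(2) = 2$ and $\beta(x) \to \infty$ as $x \to 0^+$. Hence for every integer $n \geq 2$ there is a unique $x_n \coloneqq \beta^{-1}(n) \in (0, 2]$, and both \eqref{B1} and \eqref{B2} are applicable at $x_n$.

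The next and central step is the observation that $k\beta(x_n) = kn \in \N$ for every $k \in \N$, so $\lceil k\beta(x_n) \rceil = \lfloor k\beta(x_n) \rfloor = kn$. The lower and upper bounds in \eqref{B1} then collapse to the single value
\[
I_{p,\zeta}(x_n) = p \sum_{k=1}^\infty (1-p)^{kn-1},
\]
and the same collapse applied to \eqref{B2} gives
\[
(1-p) - I_{p,\zeta}(4-x_n) = p \sum_{k=1}^\infty (1-p)^{kn}.
\]

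The final step is routine: both series are geometric with common ratio $(1-p)^n \in (0,1)$, so summation yields
\[
p \sum_{k=1}^\infty (1-p)^{kn-1} = \frac{p}{1-p}\cdot\frac{(1-p)^n}{1-(1-p)^n}, \qquad p \sum_{k=1}^\infty (1-p)^{kn} = \frac{p(1-p)^n}{1-(1-p)^n},
\]
which are exactly the two formulas in the corollary. The proof presents no substantial obstacle; the only point requiring a word of comment, and therefore the main thing worth flagging, is the exclusion $n = 1$. Formally $\beta^{-1}(1) = 4$, but $x = 4$ sits outside the interval $(0,2]$ on which Theorem 1 is stated, and indeed plugging $n = 1$ into the first displayed sum yields $1$ rather than the true value $I_{p,\zeta}(4) = 1-p$ (the mass of the first spectral band when $\zeta > 4$).
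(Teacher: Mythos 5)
Your proposal is correct and follows exactly the paper's reasoning: when $\beta(x)=n\in\N$ the floor and ceiling in \eqref{B1} (resp.\ \eqref{B2}) coincide, so the two-sided bound collapses to an equality, and summing the geometric series gives the stated formulas. The added remarks — that $\beta^{-1}(n)\in(0,2]$ for $n\geq 2$ because $\beta$ decreases from $\infty$ to $\beta(2)=2$, and that $n=1$ must be excluded since $\beta^{-1}(1)=4$ where the formula would wrongly give $1$ instead of $1-p$ — are accurate and simply make explicit what the paper leaves implicit.
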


To the best of the author's knowledge, Corollary 2 is the first instance in which the value of the IDS of any Anderson model has been given explicitly at non-trivial energies, with the exception of the Cauchy distribution \cite{Lloyd}. Notice that the sequence of special energies $\{\beta^{-1}(n)\}_{n\geq2}$ (resp. $\{4-\beta^{-1}(n)\}_{n\geq2}$) starts at $2$ and then decreases (resp. increases) strictly approaching its limit $0$ (resp. $4$). This, together with the continuity of $I_{p,\zeta}$, gives the rather obvious $I_{p,\zeta}(0)=0$ and $I_{p,\zeta}(4)=1-p$.

Corollary 2 implicitly introduces the piece-wise smooth function 
\begin{equation*}
    f_p(x)\coloneqq\begin{cases}\frac{p}{(1-p)}\frac{(1-p)^{\beta(x)}}{1-(1-p)^{\beta(x)}},&x\in(0,2],\\(1-p)-\frac{p(1-p)^{\beta(4-x)}}{1-(1-p)^{\beta(4-x)}},&x\in[2,4),
    \end{cases}
\end{equation*}
which satisfies $f_p(x)=I_{p,\zeta}(x)$ whenever $x$ is a special energy. Computing numerically $I_{p,4}$ suggests that $I_{p,4}\geq f_p$ (resp. $I_{p,4}\leq f_p$) in a neighborhood to the left (resp. right) of every special energy. We cannot say if the same property holds for every $\zeta>4$ since $I_{p,\zeta}$ rapidly decreases as $\zeta$ increases. These numerics do indicate that $I_{p,\zeta}$ converges to the lower bound as $\zeta$ goes to infinity, however this convergence cannot be uniform since the proposed limit is discontinuous (see Fig. \ref{Fig1}).
\begin{figure}[H]
    \centering
    \includegraphics[width=12cm]{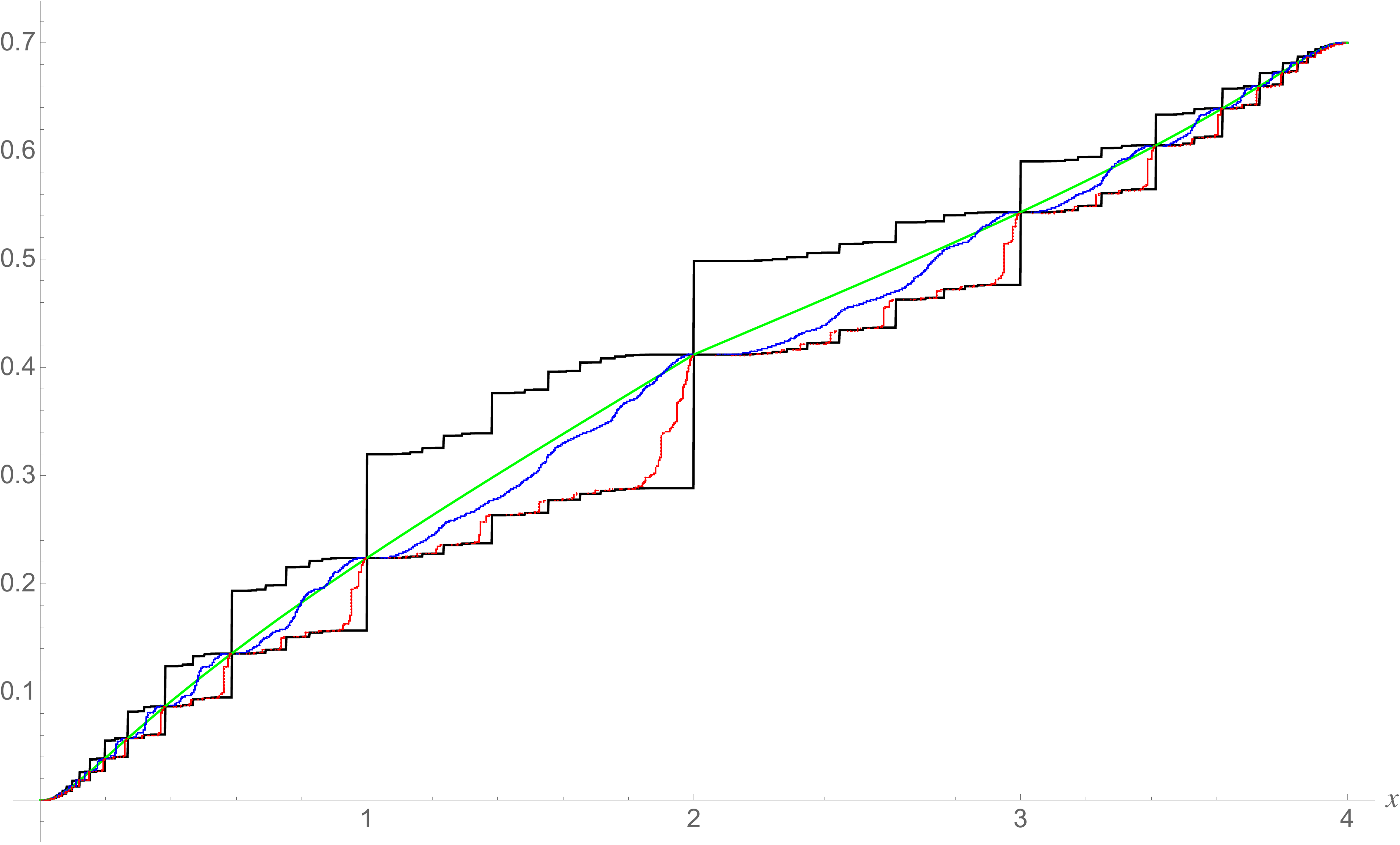}
    \caption{\textcolor{black}{\textbf{---}} Upper/Lower bound from Theorem 1\quad\textcolor{green}{\textbf{---}} $f_p$ \quad\textcolor{blue}{\textbf{---}} $I_{p,4}$ \quad\textcolor{red}{\textbf{---}} $I_{p,20}$\qquad(all for $p=3/10$).\\The special energies are the points $x$ at which all plots intersect. $I_{p,4}$ and $I_{p,20}$ were computed numerically from a $10^5\times10^5$ matrix.}
    \label{Fig1}
\end{figure}

To recover the Lifshitz behaviour from Theorem 1 we replace the discontinuous bounds with continuous ones by using the trivial estimate $y-1\leq\floor{y}\leq y\leq\ceil{y}\leq y+1$ and the geometric series:
\begin{Corollary}[Lifshitz Tails]
Let $\zeta\geq 4$, then
\begin{align}
&p\frac{(1-p)^{\beta(x)}}{1-(1-p)^{\beta(x)}}\leq I_{p,\zeta}(x)\leq \frac{p}{(1-p)^2}\frac{(1-p)^{\beta(x)}}{1-(1-p)^{\beta(x)}},& x\in(0,2]\label{IUC},\\
&p(1-p)\frac{(1-p)^{\beta(x)}}{1-(1-p)^{\beta(x)}}\leq (1-p)-I_{p,\zeta}(4-x)\leq \frac{p}{1-p}\frac{(1-p)^{\beta(x)}}{1-(1-p)^{\beta(x)}},& x\in(0,2].
\end{align}
Since $\beta(x)=\frac{\pi}{\sqrt{x}}+\bigo(\sqrt{x})$ for $x\to0$, this gives the Lifshitz tails at both edges of the first band as well as the Lifshitz constants
\begin{equation*}
\pi\ln(1-p)=\lim_{x\downarrow0}\sqrt{x}\ln I_{p,\zeta}(x),\qquad
\pi\ln(1-p)=\lim_{x\uparrow4}\sqrt{4-x}\ln\big[I_{p,\zeta}(4)-I_{p,\zeta}(x)\big].
\end{equation*}
\end{Corollary}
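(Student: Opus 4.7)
The plan is to derive both pairs of continuous bounds directly from Theorem 1 by applying the elementary inequalities $\lceil y\rceil-1\leq y$ and $\lfloor y\rfloor -1\geq y-2$ inside the geometric series in \eqref{B1} and \eqref{B2}, after which everything collapses to a closed form because what remains is a geometric series in $(1-p)^{\beta(x)}$.

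More concretely, for the upper bound in \eqref{IUC} I would start from $p\sum_{k=1}^\infty (1-p)^{\lfloor k\beta(x)\rfloor-1}$, bound each exponent below by $k\beta(x)-2$ (pulling the factor $(1-p)^{-2}$ out), and sum $\sum_{k\geq 1}(1-p)^{k\beta(x)}$ as $\tfrac{(1-p)^{\beta(x)}}{1-(1-p)^{\beta(x)}}$; for the lower bound I would use $\lceil k\beta(x)\rceil-1\leq k\beta(x)$ and sum the same series. The corresponding bounds for $(1-p)-I_{p,\zeta}(4-x)$ are obtained analogously from \eqref{B2}, the only difference being an overall factor of $(1-p)$ coming from the shift $\lfloor k\beta\rfloor\to\lfloor k\beta\rfloor$ (without the $-1$), which accounts for the asymmetry of the prefactors between the two displayed inequalities.

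For the Lifshitz constants I would then take logarithms of the lower bound $p\,(1-p)^{\beta(x)}/(1-(1-p)^{\beta(x)})$ and of the upper bound $p(1-p)^{-2}(1-p)^{\beta(x)}/(1-(1-p)^{\beta(x)})$, multiply by $\sqrt{x}$, and note that as $x\downarrow 0$ one has $(1-p)^{\beta(x)}\to 0$, so $\ln\bigl(1-(1-p)^{\beta(x)}\bigr)\to 0$, while $\sqrt{x}\ln p$ and $\sqrt{x}\ln(1-p)^{-2}$ tend to $0$. The only surviving term is $\sqrt{x}\,\beta(x)\ln(1-p)$, and plugging in $\beta(x)=\pi/\sqrt{x}+\bigo(\sqrt{x})$ yields the claimed limit $\pi\ln(1-p)$ from both sides of the sandwich. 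The same computation at the other band edge works verbatim with $4-x$ in place of $x$.

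There is no genuine obstacle here: the whole corollary is essentially a bookkeeping exercise on the inequalities of Theorem 1. The only thing to be mildly careful about is that the asymptotic $\beta(x)=\pi/\sqrt{x}+\bigo(\sqrt{x})$, which follows from the Taylor expansion $\arcsin(\sqrt{x}/2)=\sqrt{x}/2+\bigo(x^{3/2})$, gives $\sqrt{x}\,\beta(x)\to\pi$ with an error of order $x$, so multiplying by $\ln(1-p)$ (a finite constant) and adding the vanishing contributions identified above does not spoil the limit.
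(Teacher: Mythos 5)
Your proof is correct and matches the paper's intended argument: the paper explicitly notes that the corollary follows from Theorem~1 by applying the estimates $y-1\leq\lfloor y\rfloor$ and $\lceil y\rceil\leq y+1$ and summing the resulting geometric series, which is exactly what you do, and the Lifshitz constants then follow by taking $\sqrt{x}\ln(\cdot)$ of both sides and using $\beta(x)=\pi/\sqrt{x}+\bigo(\sqrt{x})$.
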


An asymptotically equivalent version of \eqref{IUC} was already given in Ref. \onlinecite{Bishop}, an article whose ideas have inspired this one, particularly Sec. II.

Our last Corollary simply gives the uniform distance between $I_{p,\zeta}$ and $I_{0,0}$ (the IDS of the 1D free Laplacian):
\begin{Corollary}
Let $\zeta\geq 4$, then $\displaystyle\sup_{x\in[0,4]}\abs{I_{p,\zeta}(x)-I_{0,0}(x)}=\abs{I_{p,\zeta}(4)-I_{0,0}(4)}=p$.
\end{Corollary}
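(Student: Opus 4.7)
My plan is to prove this directly, bypassing Theorem~1, by applying a rank-perturbation bound to the finite-volume eigenvalue counting functions. The key point is that $\zeta V_p$ restricted to $\ell^2(\{1,\ldots,L\})$ is a positive semidefinite diagonal operator whose rank equals $M_L\coloneqq\#\{j\in\{1,\ldots,L\}:V_p(j)=1\}$, which by the strong law of large numbers behaves like $pL$.

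I would first record that $I_{0,0}$ is explicit: from the Dirichlet-Laplacian eigenvalues $4\sin^2(k\pi/(2(L+1)))$ one gets $I_{0,0}(x)=\tfrac{2}{\pi}\arcsin(\sqrt{x}/2)=1/\beta(x)$ on $[0,4]$, so $I_{0,0}(4)=1$. Combined with $I_{p,\zeta}(4)=1-p$ (noted right after Corollary~2), this shows $|I_{p,\zeta}(4)-I_{0,0}(4)|=p$, so the supremum is at least $p$ and is attained at $x=4$. The problem therefore reduces to the pointwise bound $|I_{p,\zeta}(x)-I_{0,0}(x)|\leq p$ on $[0,4]$.

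For the pointwise bound I would invoke the classical rank-perturbation inequality: if $\mathcal{N}_A(x)$ denotes the number of eigenvalues of a self-adjoint matrix $A$ that are $\leq x$, and $P$ is positive semidefinite of rank $k$, then $0\leq\mathcal{N}_A(x)-\mathcal{N}_{A+P}(x)\leq k$ for all $x\in\R$, which follows from diagonalizing $P=\sum_{i=1}^k c_i|v_i\rangle\langle v_i|$ with $c_i\geq 0$ and iterating Cauchy interlacing. Writing $A_L\coloneqq -\Delta\restrict{\ell^2(\{1,\ldots,L\})}$ and $B_L\coloneqq H_{p,\zeta}\restrict{\ell^2(\{1,\ldots,L\})}=A_L+\zeta V_p\restrict{\ell^2(\{1,\ldots,L\})}$, this yields
\begin{equation*}
0\leq\mathcal{N}_{A_L}(x)-\mathcal{N}_{B_L}(x)\leq M_L,\qquad x\in\R.
\end{equation*}
Dividing by $L$, letting $L\to\infty$, and using $M_L/L\to p$ almost surely together with the (non-random, continuous) a.s.\ limits of the two counting quotients being exactly $I_{0,0}(x)$ and $I_{p,\zeta}(x)$, I obtain $0\leq I_{0,0}(x)-I_{p,\zeta}(x)\leq p$ on $[0,4]$.

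The only potentially delicate step is the rank-perturbation inequality itself, but this is textbook; everything else is bookkeeping. As a sanity check, the same bound can also be extracted from Theorem~1: for $x\in(0,2]$, the lower bound in \eqref{B1} combined with Bernoulli's inequality $(1-p)^{\beta(x)}\geq 1-p\beta(x)$ (valid because $\beta(x)\geq 1$) gives $I_{p,\zeta}(x)\geq 1/\beta(x)-p$; for $x\in[2,4)$ the analogous bound follows from \eqref{B2}, the identity $I_{0,0}(x)+I_{0,0}(4-x)=1$, and the elementary estimate $(1-p)^s(1+ps)\leq 1$ for $s\geq 0$.
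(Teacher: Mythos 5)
Your proof is correct, and it takes a genuinely different route from the paper. The paper derives the pointwise bound $0\leq I_{0,0}(x)-I_{p,\zeta}(x)\leq p$ from the lower estimate in Corollary~3 (itself a consequence of Theorem~1) together with the monotonicity $I_{p,\zeta}\leq I_{0,\zeta}$, and then closes the argument by computing $-\partial f/\partial t$ and showing it lies in $[0,1]$ via an integral representation and the binomial inequality. You instead bypass Theorem~1 entirely: the rank-perturbation inequality $0\leq\mathcal{N}_{A}(x)-\mathcal{N}_{A+P}(x)\leq\operatorname{rank}(P)$ for $P\geq 0$, applied with $P=\zeta V_p\restrict{\ell^2(\{1,\ldots,L\})}$ of rank $M_L$, gives the finite-volume estimate immediately, and the conclusion follows by dividing by $L$ and using $M_L/L\to p$ a.s.\ together with the (continuous, non-random) IDS limits. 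This is both more elementary and more robust: it needs none of the bracketing machinery, and the pointwise bound $0\leq I_{0,0}(x)-I_{p,\zeta}(x)\leq p$ it yields is valid for every $\zeta>0$, not just $\zeta\geq4$. (The restriction $\zeta\geq4$ still enters only to identify $I_{p,\zeta}(4)=1-p$ and hence to pin the supremum at $x=4$ with value exactly $p$; you correctly invoke this.) The price is that your argument gives only the coarse bound $p$ and says nothing about where in $(0,4)$ the difference is small, whereas the paper's route through $f(p,x)$ exhibits the precise shape of the gap. One small remark: your closing ``sanity check'' via Theorem~1 is essentially the paper's own proof in disguise (it leans on the Corollary~3 lower bound $f(p,x)$ plus Bernoulli's inequality, which is a sharper version of the paper's derivative estimate), so it is a fine cross-check but not an independent alternative.
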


Theorem 1 and its Corollaries refer to the first band of $\sigma(H_{p,\zeta})$ but they can be transferred to the second one by means of the standard unitary map $(U\phi)(j)=(-1)^j\phi(j)$. Indeed, $U$ transforms $H_{p,\zeta}$ as $U H_{p,\zeta} U^*=4+\zeta-\left(-\Delta+\zeta[1-V_p]\right)$. Since $\{1-V_p(j)\}_{j\in\N}$ are i.i.d. following a Bernoulli($1-p$) distribution we have for all $\zeta\geq0$
\begin{equation*}
    I_{p,\zeta}(x)=1-I_{1-p,\zeta}(4+\zeta-x),\qquad x\in\R.
\end{equation*}
This equality exchanges the two bands of the spectrum of $H_{p,\zeta}$ and therefore allows us to restate our results on the second one, which will now depend on $\zeta$ by translation.
 
The rest of the article is organized as follows: In Sec. II we construct from $H_{p,\zeta}$ four new random operators whose IDS we compute explicitly. In Sec. III we complete the proof Theorem 1 by comparing the eigenvalues of the new operators to those of $H_{p,\zeta}$ in the finite volume restriction, and we also prove Corollary 4. Finally, we give some closing remarks and a couple of conjectures in Sec. IV.

\section{IDS of Some Random Operators}
The gap of the spectrum of $H_{p,\zeta}$ suggests that eigenvalues of the first band come from eigenvectors whose mass is concentrated on the points where the potential is $0$. With this in mind we define the random variables
\begin{align*}
    L_1&\coloneqq\min\{j>0\,|\,V_p(j)=1\},& Y_1&\coloneqq L_1-1,\\
    L_{n+1}&\coloneqq \min\{j>L_n\,|\,V_p(j)=1\},& Y_{n+1}&\coloneqq L_{n+1}-L_n-1.
\end{align*}
In words, $L_n$ is the position of the $n$-th 1 and $Y_{n+1}$ is the number of 0's between the $n$-th and $(n+1)$-th 1. Fig. \ref{Fig2} shows a possible realization of $H_{p,\zeta}$ (we will keep this realization as the starting point for all later diagrams):
\begin{figure}[H]
    \centering
    \includegraphics{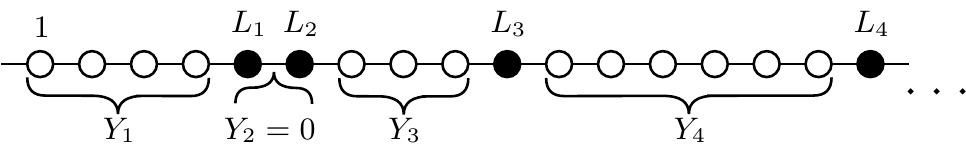}
    \caption{A possible realization of $H_{p,\zeta}$. The white (resp. black) dots represent points where $V_p(j)=0$ (resp. $V_p(j)=1$).}
    \label{Fig2}
\end{figure}
By definition, the $Y_i$ are i.i.d. following a geometric distribution $\PPrb{Y_i=k}=(1-p)^k p$ for $k\in \N_0$, and $L_n=n+\sum_{i=1}^n Y_i$. We also define $\cc{Y}_n\coloneqq\max_{1\leq i\leq n}Y_i$ and prove it grows logarithmically in the following lemma.
\begin{Lemma}
$\displaystyle \frac{\cc{Y}_n}{\ln(n)/\abs{\ln(1-p)}}\xrightarrow[n\to\infty]{\mathbb{P}\text{-a.s.}}1$.
\end{Lemma}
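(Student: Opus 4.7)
The plan is to prove the two inequalities $\limsup \cc{Y}_n/a_n \leq 1$ and $\liminf \cc{Y}_n/a_n \geq 1$ almost surely, where $a_n \coloneqq \ln(n)/\abs{\ln(1-p)}$, by two applications of the Borel--Cantelli lemma. Writing $q = 1-p$, I will use throughout that the distribution of the $Y_i$ gives $\PPrb{Y_i\geq t}=q^{\lceil t\rceil}\leq q^t$ for every $t\geq 0$, and that $q^{a_n}=1/n$.

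For the \emph{upper} bound, fix $\epsilon>0$. By the union bound and independence,
\begin{equation*}
\PPrb{\cc{Y}_n\geq(1+\epsilon)a_n}\leq n\,q^{(1+\epsilon)a_n}=n\cdot n^{-(1+\epsilon)}=n^{-\epsilon}.
\end{equation*}
Since this is not summable for $\epsilon\in(0,1]$, I would pass to a geometric subsequence $n_k\coloneqq 2^k$, along which $\sum_k n_k^{-\epsilon}<\infty$, and apply Borel--Cantelli to get $\cc{Y}_{n_k}\leq(1+\epsilon)a_{n_k}$ for all $k$ large enough, $\mathbb{P}$-a.s. Monotonicity of $n\mapsto\cc{Y}_n$ and the asymptotic $a_{n_{k+1}}/a_{n_k}=(k+1)/k\to 1$ then bridge the gaps: for $n\in[n_k,n_{k+1}]$, $\cc{Y}_n\leq\cc{Y}_{n_{k+1}}\leq(1+\epsilon)a_{n_{k+1}}=(1+\epsilon)(1+o(1))a_n$, which yields $\limsup\cc{Y}_n/a_n\leq 1+\epsilon$ almost surely. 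Letting $\epsilon\downarrow 0$ along a countable sequence gives the desired almost sure upper bound.

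For the \emph{lower} bound, again fix $\epsilon\in(0,1)$. By independence and the inequality $1-u\leq e^{-u}$,
\begin{equation*}
\PPrb{\cc{Y}_n<(1-\epsilon)a_n}=\bigl(1-\PPrb{Y_1\geq(1-\epsilon)a_n}\bigr)^n\leq\bigl(1-q^{(1-\epsilon)a_n}\bigr)^n\leq\exp\!\bigl(-n\cdot n^{-(1-\epsilon)}\bigr)=\exp(-n^{\epsilon}).
\end{equation*}
This is summable in $n$ directly (no subsequence needed), so Borel--Cantelli gives $\cc{Y}_n\geq(1-\epsilon)a_n$ for all sufficiently large $n$, $\mathbb{P}$-a.s., hence $\liminf\cc{Y}_n/a_n\geq 1-\epsilon$, and letting $\epsilon\downarrow 0$ concludes.

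The only nontrivial step is the upper bound, because the naive union bound estimate $n^{-\epsilon}$ is not summable; the standard workaround is the subsequence-plus-monotonicity argument sketched above. Everything else is a short computation using $q^{a_n}=1/n$ and the geometric tail of the $Y_i$.
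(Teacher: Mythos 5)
Your proof has one genuine error in the lower-bound step, and it takes a different (but valid, once repaired) route for the upper bound.

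\emph{The error.} In the lower bound you write
\begin{equation*}
\PPrb{\cc{Y}_n<(1-\epsilon)a_n}=\bigl(1-\PPrb{Y_1\geq(1-\epsilon)a_n}\bigr)^n\leq\bigl(1-q^{(1-\epsilon)a_n}\bigr)^n,
\end{equation*}
but this inequality points the wrong way. You yourself record $\PPrb{Y_1\geq t}=q^{\lceil t\rceil}\leq q^t$, so $1-\PPrb{Y_1\geq t}\geq 1-q^t$, and hence $\bigl(1-\PPrb{Y_1\geq t}\bigr)^n\geq\bigl(1-q^t\bigr)^n$, not $\leq$. To make the step go through you must bound the tail from \emph{below}, using $\lceil t\rceil\leq t+1$, which gives $\PPrb{Y_1\geq t}=q^{\lceil t\rceil}\geq q^{t+1}=q\,q^{t}$. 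Then
\begin{equation*}
\PPrb{\cc{Y}_n<(1-\epsilon)a_n}\leq\bigl(1-q\,q^{(1-\epsilon)a_n}\bigr)^n\leq\exp\bigl(-q\,n^{\epsilon}\bigr),
\end{equation*}
which is still summable, so Borel--Cantelli applies exactly as you intend. This is precisely the estimate in the paper (there the extra factor $1-p=q$ appears explicitly), so the fix is cosmetic but necessary.

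\emph{The different route for the upper bound.} For $\limsup\cc{Y}_n/a_n\leq1$ you apply the union bound to $\cc{Y}_n$ directly, obtain the non-summable $n^{-\epsilon}$, and salvage Borel--Cantelli by passing to the geometric subsequence $n_k=2^k$ and interpolating with the monotonicity of $\cc{Y}_n$ together with the slow variation $a_{n_{k+1}}/a_{n_k}\to1$. The paper avoids the subsequence trick entirely: it first shows $\PPrb{Y_n\geq(1+\epsilon)a_n}\leq n^{-(1+\epsilon)}$, which \emph{is} summable without any union bound, concludes $\limsup_n Y_n/a_n\leq1$ a.s., and then lifts this to the running maximum $\cc{Y}_n$ by a short deterministic argument (split $\max_{j\leq n}Y_j$ into the finitely many $j<k(\omega,\epsilon)$, which contribute $\bigo(1)/\ln n\to0$, and the tail $j\geq k(\omega,\epsilon)$, controlled by the a.s.\ bound). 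Both approaches are standard and of comparable length; yours trades the maximum-lifting lemma for the subsequence/monotonicity bridge. The subsequence method is perhaps more mechanical and generalizes with less thought, while the paper's version keeps every Borel--Cantelli application summable on the full sequence, which some readers find cleaner.
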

\begin{proof}
We start with the $\liminf$. For any $\epsilon>0$ we have 
\begin{align*}
    \PPrb{\cc{Y}_n\leq\frac{(1-\epsilon)\ln (n)}{\abs{\ln (1-p)}}}&=\PPrb{Y_1\leq\frac{(1-\epsilon)\ln (n)}{\abs{\ln (1-p)}}}^n=\PPrb{Y_1\leq\floor{\frac{(1-\epsilon)\ln (n)}{\abs{\ln (1-p)}}}}^n\\
    &=\left(1-(1-p)^{\floor{\frac{(1-\epsilon)\ln (n)}{\abs{\ln (1-p)}}}+1}\right)^n\\
    &\leq\left(1-(1-p)^{\frac{(1-\epsilon)\ln (n)}{\abs{\ln (1-p)}}+1}\right)^n=\left(1-\frac{(1-p)}{n^{1-\epsilon}}\right)^n,
\end{align*}
which is summable by comparison with $e^{-(1-p)n^\epsilon}$. With this we can apply the Borel-Cantelli Lemma and then intersect over all $\epsilon\in\Q$ to obtain $\liminf_{n\to\infty}\frac{\cc{Y}_n}{\ln(n)/\abs{\ln(1-p)}}\geq1$ almost surely.

Following the same reasoning,
\begin{align*}
    \PPrb{Y_n\geq\frac{(1+\epsilon)\ln (n)}{\abs{\ln (1-p)}}}&=\PPrb{Y_n\geq\ceil{\frac{(1+\epsilon)\ln (n)}{\abs{\ln (1-p)}}}}=(1-p)^{\ceil{\frac{(1+\epsilon)\ln (n)}{\abs{\ln (1-p)}}}}\\&\leq(1-p)^{\frac{(1+\epsilon)\ln (n)}{\abs{\ln (1-p)}}}=\frac{1}{n^{1+\epsilon}}
\end{align*}
implies $\limsup_{n\to\infty}\frac{Y_n}{\ln(n)/\abs{\ln(1-p)}}\leq1$ almost surely.

Now, for every $\omega$ in the full probability event $\{\limsup_{n\to\infty}\frac{Y_n}{\ln(n)/\abs{\ln(1-p)}}\leq1\}$ and every $\epsilon>0$ there exists $k(\omega,\epsilon)\in\N$ such that $j\geq k(\omega,\epsilon)$ implies $\frac{Y_j(\omega)}{\ln(j)/\abs{\ln(1-p)}}\leq1+\epsilon$, hence 
\begin{align*}
\frac{\cc{Y}_n(\omega)}{\ln(n)/\abs{\ln(1-p)}}&= \frac{\abs{\ln(1-p)}}{\ln(n)}\max_{1\leq j\leq n}Y_j(\omega)\\
&\leq\frac{\abs{\ln(1-p)}}{\ln(n)}\left(\max_{1\leq j< k(\omega,\epsilon)}Y_j(\omega)\right)+\frac{\abs{\ln(1-p)}}{\ln(n)}\max_{k(\omega,\epsilon)\leq j\leq n}Y_j(\omega)\\
&\leq\frac{\abs{\ln(1-p)}}{\ln(n)}\left(\max_{1\leq j< k(\omega,\epsilon)}Y_j(\omega)\right)+\max_{k(\omega,\epsilon)\leq j\leq n}\frac{Y_j(\omega)}{\ln(j)/\abs{\ln(1-p)}}\\&\leq\frac{\abs{\ln(1-p)}}{\ln(n)}\left(\max_{1\leq j< k(\omega,\epsilon)}Y_j(\omega)\right)+(1+\epsilon)\xrightarrow[n\to\infty]{}1+\epsilon,
\end{align*}
which finishes the proof.
\end{proof}

Using the $Y_i$'s we will define four random operators whose eigenvalues will be compared in Sec. III with those of $H_{p,\zeta}$ (in the finite volume restriction) either by direct comparison of operators and/or by the Min-Max Principle.

The first random operator is $\bigoplus_{i=1}^\infty-\Delta_{Y_i}$, where (whenever $n>0$) $-\Delta_{n}$ is the usual $n\times n$ Laplacian matrix (see Ref. \onlinecite[Section 2.2.3.3]{Thomas}):
\begin{equation*}
-\Delta_{n}\coloneqq\begin{pmatrix}
2 & -1 & & \\
-1 & \ddots & \ddots & \\
& \ddots & \ddots & -1 \\
 & & -1 & 2 \end{pmatrix},\qquad\sigma(-\Delta_{n})=\left\{\lambda_{n,k}\coloneqq 4 \sin^2\left(\frac{\pi k}{2(n+1)}\right)\,\middle|\,k=1,\ldots,n\right\}.
\end{equation*}
As we shall see, its IDS (denoted by $I_p^\leq$), is not continuous, so we also introduce its limit from the left (denoted by $I_p^<$) and call them both IDS. These two IDS are given by the deterministic $\mathbb{P}$-a.s. limits
\begin{align*}
    I_p^\leq(x)&\coloneqq\lim_{n\to \infty}\frac{1}{L_n}\#\left\{\lambda\in\sigma\left(\bigoplus_{i=1}^n-\Delta_{Y_i}\right)\,\middle| \,\lambda\leq x\right\},\\
    I_p^<(x)&\coloneqq\lim_{n\to \infty}\frac{1}{L_n}\#\left\{\lambda\in\sigma\left(\bigoplus_{i=1}^n-\Delta_{Y_i}\right)\,\middle| \,\lambda<x\right\}.
\end{align*}
Since $\frac{\sum_{i=1}^n Y_i}{L_n}\xrightarrow[n\to\infty]{\mathbb{P}\text{-a.s.}}1-p$, dividing by $L_n$ instead of the more natural $\sum_{i=1}^n Y_i$ amounts to a constant factor. Clearly $I_p^\leq(0)=I_p^<(0)=0$ and $I_p^\leq(4)=I_p^<(4)=1-p$. For $I_p^<$ and $x\in(0,4)$ we have
\begin{align*}
    \#\left\{\lambda\in\sigma\left(\bigoplus_{i=1}^n-\Delta_{Y_i}\right)\,\middle| \,\lambda<x\right\}&=\sum_{k=1}^{\cc{Y}_n} \#\left\{1\leq i\leq n\,\middle|\,Y_i>0,\,k\leq Y_i,\,\lambda_{Y_i,k}<x\right\}\\
    &=\sum_{k=1}^{\cc{Y}_n} \#\left\{1\leq i\leq n\,\middle|\,k\beta(x)-1<Y_i\right\}.
\end{align*}
The law of large numbers suggests that $\#\left\{1\leq i\leq n\,\middle|\,y<Y_i\right\} \approx n\PPrb{y<Y_1}$ for large $n$, so we set
\begin{equation*}
    R(y,n)=\#\left\{1\leq i\leq n\,\middle|\,y<Y_i\right\}- n\PPrb{y<Y_1}
\end{equation*}
and claim that $\frac{\sup_{y\in\R}\abs{R(y,n)}}{n^\gamma}\xrightarrow[n\to\infty]{\mathbb{P}\text{-a.s.}}0$ for $\gamma>1/2$. The proof of this claim only requires an application of the Dvoretzky–Kiefer–Wolfowitz inequality \cite{Massart}. Indeed
\begin{align*}
    \PPrb{\frac{\sup_{y\in\R}\abs{R(y,n)}}{n^\gamma}>\epsilon}&=\PPrb{\sup_{y\in\R}\abs{\frac{\#\{1\leq i\leq n\,|\,y<Y_1\}}{n}-\PPrb{y<Y_1}}>\epsilon n^{\gamma-1}}\\
    &\leq2\exp\left(-2\epsilon^2 n^{2\gamma-1}\right)
\end{align*}
which is summable, so the Borel-Cantelli Lemma gives the claim. Since $\cc{Y}_n$ grows like $\bigo(\ln(n))$ and $\frac{L_n}{n}\xrightarrow[n\to\infty]{\mathbb{P}\text{-a.s.}}\frac{1}{p}$ we have
\begin{equation*}
    \frac{1}{L_n}\abs{\sum_{k=1}^{\cc{Y}_n}R(k\beta(x)-1,n)}\\\leq\frac{\cc{Y}_n}{L_n}\sup_{y\in\R}\abs{R(y,n)}\xrightarrow[n\to\infty]{\mathbb{P}\text{-a.s.}}0,
\end{equation*}
hence we have shown 
\begin{align*}
I_p^<(x)&=p\sum_{k=1}^\infty \PPrb{k\beta(x)-1<Y_1}\\
&=p\sum_{k=1}^\infty \PPrb{\floor{k\beta(x)}-1<Y_1}=p\sum_{k=1}^\infty (1-p)^{\floor{k\beta(x)}},\qquad x\in(0,4).
\end{align*}
The computation for $I_p^\leq$ follows the same steps, so we just make explicit the differences:
\begin{align*}
    I_p^\leq(x)&=p\sum_{k=1}^\infty \PPrb{k\beta(x)-1\leq Y_1}\\
    &=p\sum_{k=1}^\infty \PPrb{\ceil{k\beta(x)}-1\leq Y_1}=p\sum_{k=1}^\infty (1-p)^{\ceil{k\beta(x)}-1},\qquad x\in(0,4).
\end{align*}
It is worth noting that if $x\in(0,4)$ is such that $\beta(x)\in(1,\infty)\setminus\Q$ then $I_p^\leq(x)= I_p^<(x)$. Moreover, since $U(-\Delta_{n})U^*=4+\Delta_{n}$ we have
\begin{equation*}
    \#\{1\leq k\leq n\,|\,\lambda_{n,k}\leq x\}= \#\{1\leq k\leq n\,|\,4-x\leq \lambda_{n,k}\}=n-\#\{1\leq k\leq n\,|\,\lambda_{n,k}< 4-x\},
\end{equation*}
which implies
\begin{equation}\label{ET1}
I_p^\leq(x)=(1-p)-I_p^<(4-x),\qquad x\in\R.
\end{equation}

We now move to the IDS of other random operators with a similar structure. For $n>0$ we define the $n\times n$ matrix $A_{n}(i,j)\coloneqq\delta_{1,i}\delta_{1,j}+\delta_{n,i}\delta_{n,j}$, and introduce the Dirichlet and Neumann Laplacian matrices with their spectra (see Ref. \onlinecite[Theorem 2.4]{Simon}):
\begin{align*}
    -\Delta_n^D&=-\Delta_n+A_n,\qquad\sigma(-\Delta_{n}^D)=\left\{\lambda^D_{n,k}\coloneqq 4 \sin^2\left(\frac{\pi k}{2n}\right)\,\middle|\,k=1,\ldots,n\right\},\\
    -\Delta_n^N&=-\Delta_n-A_n,\qquad\sigma(-\Delta_{n}^N)=\left\{\lambda^N_{n,k}\coloneqq 4 \sin^2\left(\frac{\pi(k-1)}{2n}\right)\,\middle|\,k=1,\ldots,n\right\}.
\end{align*}
These Dirichlet and Neumann Laplacians are related by $U(-\Delta_n^D)U^*=4+\Delta_n^N$, therefore
\begin{equation}\label{ET2}
I_p^{\leq,D}(x)=(1-p)-I_p^{<,N}(4-x),\qquad x\in\R,
\end{equation}
where we have used the (natural) definitions
\begin{align*}
I_p^{\leq,D}(x)&\coloneqq\lim_{n\to \infty}\frac{1}{L_n}\#\left\{\lambda\in\sigma\left(\bigoplus_{i=1}^n-\Delta^D_{Y_i}\right)\,\middle| \,\lambda\leq x\right\}\\
&\overset{x\in(0,4)}{=}p\sum_{k=1}^\infty \PPrb{\ceil{k\beta(x)}\leq Y_1}=p\sum_{k=1}^\infty (1-p)^{\ceil{k\beta(x)}},\\
I_p^{<,N}(x)&\coloneqq\lim_{n\to \infty}\frac{1}{L_n}\#\left\{\lambda\in\sigma\left(\bigoplus_{i=1}^n-\Delta^N_{Y_i}\right)\,\middle| \,\lambda< x\right\}\\
&\overset{x\in(0,4)}{=}p\sum_{k=1}^\infty \PPrb{\floor{(k-1)\beta(x)}< Y_1}=p\sum_{k=1}^\infty (1-p)^{\floor{(k-1)\beta(x)}+1}.
\end{align*}

Finally, our last IDS is given by
\begin{align*}
    I_p^{<,*}(x)&\coloneqq\lim_{n\to\infty}\frac{1}{L_n}\#\left\{\lambda\in\sigma\left(\bigoplus_{i=1}^n-\Delta^D_{Y_i+2}\right)\,\middle| \,\lambda< x\right\}\\
    &\overset{x\in(0,4)}{=}p\sum_{k=1}^\infty \PPrb{\floor{k\beta(x)}-2< Y_1}=p\sum_{k=1}^\infty (1-p)^{\floor{k\beta(x)}-1}.
\end{align*}

\section{Proof of Theorem 1 and Corollary 4}
\begin{proof}[Proof of Theorem 1]
We recall that $I_{p,\zeta}$ is continuous and therefore can be computed by counting eigenvalues less ($<$) or less or equal ($\leq$) than $x$
\begin{equation*}
     I_{p,\zeta}(x)=\lim_{L\to \infty}\frac{1}{L}\#\left\{\lambda\in\sigma\left(-\Delta_{L}+\zeta V_p\right)\,\middle| \,\lambda\leq x\right\}=\lim_{L\to \infty}\frac{1}{L}\#\left\{\lambda\in\sigma\left(-\Delta_{L}+\zeta V_p\right)\,\middle| \,\lambda< x\right\}.
\end{equation*}
We order the eigenvalues of any self-adjoint $n$-dimensional operator $O$ increasingly allowing for multiplicities
\begin{equation*}
    \lambda_1(O)\leq\lambda_2(O)\leq\cdots\leq\lambda_n(O).
\end{equation*}
We will derive one lower bound and two upper bounds for $I_{p,\zeta}$.

The lower bound comes from $-\Delta_{L_n}+\zeta V_p\leq-\Delta_{L_n}+\infty V_p$ where the right-hand side is to be understood as the direct sum of the free Laplacian on the connected components of $0$'s of $V_p$. To make proper sense of this, we delete from $-\Delta_{L_n}+\zeta V_p$ the $j$-th row and $j$-th column for all $j\in\{1,\ldots,L_n\}$ such that $V_p(j)=1$. The resulting principal sub-matrix is unitarily equivalent to $\bigoplus_{i=1}^n-\Delta_{Y_i}$ and because of the Cauchy Eigenvalue Interlacing Theorem we have
\begin{equation*}
    -\Delta_{L_n}+\zeta V_p\leq\bigoplus_{i=1}^n-\Delta_{Y_i},\qquad\text{first }\sum_{i=1}^n Y_i\text{ eigenvalues},
\end{equation*}
which is an inequality of eigenvalues meaning $\lambda_j\big(-\Delta_{L_n}+\zeta V_p\big)\leq\lambda_j\big(\bigoplus_{i=1}^n-\Delta_{Y_i}\big)$ for $1\leq j\leq\sum_{i=1}^n Y_i$. After counting eigenvalues less or equal ($\leq$) than $x\leq4$ of both operators, dividing by $L_n$ and taking the limit, we get
\begin{equation}\label{<=}
    I_p^\leq(x)\leq I_{p,\zeta}(x),\qquad x\in[0,4].
\end{equation}

To obtain the first upper bound for $I_{p,\zeta}$ we apply the Neumann part of Dirichlet-Neumann bracketing.
We disconnect all the $Y_i$ (we use $Y_i$ to denote the set of points as well as its cardinality) and all the points where $V_p(j)=1$ at the cost of having Neumann boundary conditions, as shown in Fig. \ref{Fig3}.
\begin{figure}[H]
    \centering
    \includegraphics{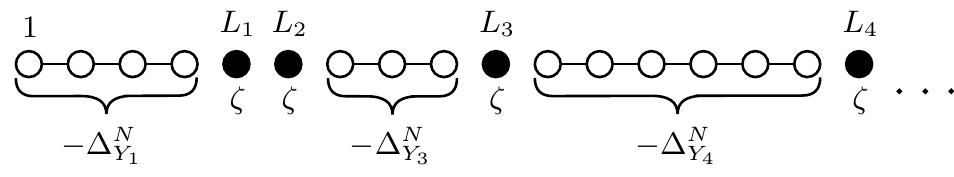}
    \caption{Resulting operator after applying the Neumann part of Dirichlet-Neumann bracketing to $-\Delta_{L_n}+\zeta V_p$ in order to disconnect all the $Y_i$'s.}
    \label{Fig3}
\end{figure}
Again, we identify unitarily equivalent operators to write
\begin{equation*}
    \left(\bigoplus_{i=1}^{n}-\Delta^N_{Y_i}\right)\oplus\zeta\id_n\leq-\Delta_{L_n}+\zeta V_p,\qquad\text{all }L_n\text{ eigenvalues}.
\end{equation*}
Counting eigenvalues less ($<$) than $x\leq4\leq\zeta$ we get
\begin{equation}\label{<N}
    I_{p,\zeta}(x)\leq I_p^{<,N}(x),\qquad x\in[0,4].
\end{equation}
Putting together \eqref{<=} and \eqref{<N} evaluated at $4-x$, subtracting $1-p$, and using \eqref{ET1} and \eqref{ET2} we conclude
\begin{align*}
    I_p^{\leq,D}(x)&\leq(1-p)-I_{p,\zeta}(4-x)\leq I_p^{<}(x),&x\in[0,4],\\
    p\sum_{k=1}^\infty (1-p)^{\ceil{k\beta(x)}}&\leq(1-p)-I_{p,\zeta}(4-x)\leq p\sum_{k=1}^\infty (1-p)^{\floor{k\beta(x)}},&x\in(0,4),
\end{align*}
which is \eqref{B2}.

To prove \eqref{B1} we need to find another upper bound for $I_{p,\zeta}$. In order to do this we construct from $-\Delta_{L_n}+\zeta V_p$ a new larger operator $-\Delta_{L_n+n}+\frac{\zeta}{2}V'$ by doubling each point in which $V_p(j)=1$ while maintaining the $Y_i$'s.
\begin{figure}[H]
    \centering
    \includegraphics{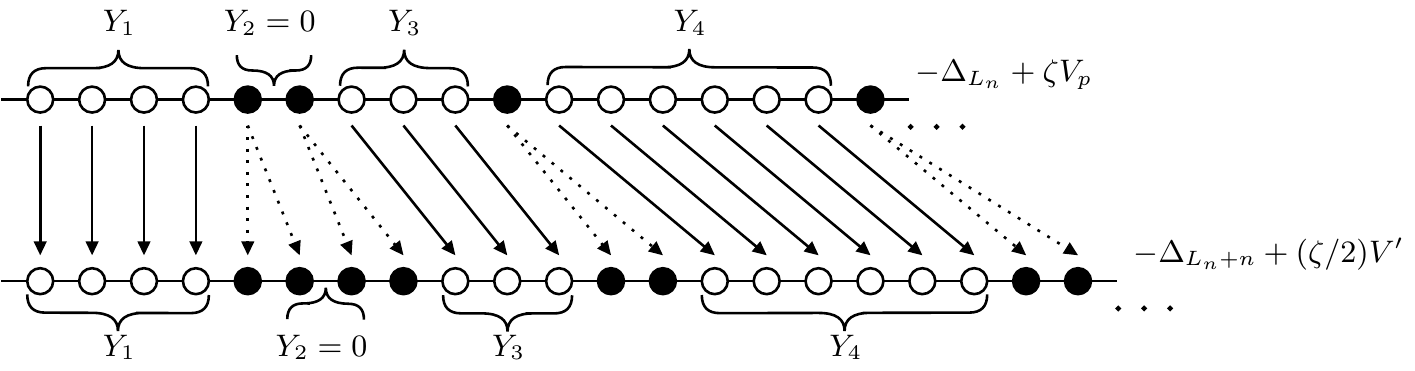}
    \caption{Construction of $-\Delta_{L_n+n}+\frac{\zeta}{2}V'$ from $-\Delta_{L_n}+\zeta V_p$ by doubling each positive potential point.}
    \label{Fig4}
\end{figure}
To be precise, $V'(j)\coloneqq\sum_{k=1}^\infty\left(\delta_{L_k+k-1,j}+\delta_{L_k+k,j}\right)$ whereas $V_p(j)=\sum_{k=1}^\infty \delta_{L_k,j}$. This doubling procedure comes with a natural linear map $T:\ell^2\left(\{1,\ldots,L_n\}\right)\longrightarrow\ell^2\left(\{1,\ldots,L_n+n\}\right)$,
\begin{equation*}
    (T\phi)(j)\coloneqq\begin{cases}\phi(j-k),&\text{if }\,L_k+k+1\leq j \leq L_{k+1}+(k+1)-1,\\
    \phi(L_k),&\text{if }\,j=L_k+k,
    \end{cases}
\end{equation*}
with the convention $L_0=0$, which assigns to $T\phi$ (lower row) the same values of $\phi$ (upper row) according to Fig \ref{Fig4}. We will now prove:
\begin{equation*}
    -\Delta_{L_n+n}+\frac{\zeta}{2}V'\leq -\Delta_{L_n}+\zeta V_p,\qquad\text{first }L_n\text{ eigenvalues}.
\end{equation*}

The map $T$ is clearly injective and by direct computation we have for all $\psi,\phi\in\ell^2\left(\{1,\ldots,L_n\}\right)$:
\begin{align*}
\hp{T\psi}{\frac{\zeta}{2}V' T\phi}&=\frac{\zeta}{2}\sum_{k=1}^n \cc{(T\psi)(L_k+k-1)}\,(T\phi)(L_k+k-1)+\frac{\zeta}{2}\sum_{k=1}^n \cc{(T\psi)(L_k+k)}\,(T\phi)(L_k+k)\\
&=\zeta\sum_{k=1}^n \cc{\psi(L_k)}\,\phi(L_k)=\hp{\psi}{\zeta V_p\phi},\\
\hp{T\psi}{\Delta_{L_n+n} T\phi}&=\sum_{k=0}^{n-1}\,\,\sum_{j=L_k+k+1}^{L_{k+1}+(k+1)-2}\cc{(T\psi)(j)}\,(\Delta_{L_n+n} T\phi)(j)\\
&\qquad+\sum_{k=1}^n \underbrace{\cc{(T\psi)(L_k+k-1)}}_{=(T\psi)(L_k+k)}\,\big[(\Delta_{L_n+n}T\phi)(L_k+k-1)+(\Delta_{L_n+n}T\phi)(L_k+k)\big]\\
&=\sum_{k=0}^{n-1}\,\,\sum_{j=L_k+1}^{L_{k+1}-1}\cc{\psi(j)}\,(\Delta_{L_n} \phi)(j)+\sum_{k=1}^{n}\cc{\psi(L_k)}\,(\Delta_{L_n}\phi)(L_k)=\hp{\psi}{\Delta_{L_n}\phi}.
\end{align*}
Let $\phi_i$ be the normalized eigenvector associated to $\lambda_i\big(-\Delta_{L_n}+\zeta V_p\big)$. Also, define the subspace $S\coloneqq\vect\{e_{L_k+k}\,|\,k=1,\ldots,n\}\subseteq\ell^2\left(\{1,\ldots,L_n+n\}\right)$ where the $e_i$ are the canonical basis, and its orthogonal complement $S^\perp\subseteq\ell^2\left(\{1,\ldots,L_n+n\}\right)$ with 
their associated orthogonal projectors $P_S,\,P_{S^\perp}$. Then, due to the injectivity of $T$ and the Min-Max Principle we have for $k\leq L_n$
\begin{align*}
    \lambda_k\left(-\Delta_{L_n+n}+\frac{\zeta}{2}V'\right)&\leq\sup_{(x_1\ldots,x_k)\in\C^k\setminus\{0\}}\frac{\hp{\sum_{i=1}^k x_i T\phi_i}{(-\Delta_{L_n+n}+\frac{\zeta}{2}V')\sum_{i=1}^k x_i T\phi_i}}{\norm{\sum_{i=1}^k x_i T\phi_i}^2}\\
    &=\sup_{(x_1\ldots,x_k)\in\C^k\setminus\{0\}}\frac{\sum_{i=1}^k \abs{x_i}^2 \lambda_i\big(-\Delta_{L_n}+\zeta V_p\big)}{\norm{P_S\sum_{i=1}^k x_i T\phi_i}^2+\norm{P_{S^\perp}\sum_{i=1}^k x_i T\phi_i}^2}\\
    &\leq\lambda_k\big(-\Delta_{L_n}+\zeta V_p\big) \sup_{(x_1\ldots,x_k)\in\C^k\setminus\{0\}}\frac{\sum_{i=1}^k \abs{x_i}^2}{\norm{\sum_{i=1}^k x_i P_{S^\perp}T\phi_i}^2}\\
    &=\lambda_k\big(-\Delta_{L_n}+\zeta V_p\big) \sup_{(x_1\ldots,x_k)\in\C^k\setminus\{0\}}\frac{\sum_{i=1}^k \abs{x_i}^2}{\norm{\sum_{i=1}^k x_i \phi_i}^2}=\lambda_k\big(-\Delta_{L_n}+\zeta V_p\big),
\end{align*}
which ends the proof of the claim.

The advantage of $-\Delta_{L_n+n}+\frac{\zeta}{2}V'$ is that it does not have isolated points with positive potential. Hence, we can apply to it the Neumann part of Dirichlet-Neumann bracketing to disconnect each $Y_i$ together with its two adjacent points, as shown in Fig. \ref{Fig5}.
\begin{figure}[H]
    \centering
    \includegraphics{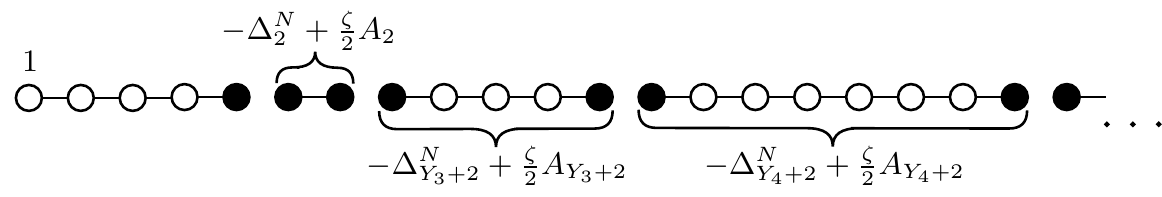}
    \caption{Resulting operator after applying the Neumann part of Dirichlet-Neumann bracketing to $-\Delta_{L_n+n}+\frac{\zeta}{2}V'$ in order to disconnect each $Y_i$ together with its two adjacent points.}
    \label{Fig5}
\end{figure}
Since $\zeta\geq4$ we have that $-\Delta^N_{n}+\frac{\zeta}{2}A_{n}\geq-\Delta^D_{n}$, and therefore
\begin{equation*}
    \left(Y_1+1\text{ term} \right)\oplus\left(\bigoplus_{i=2}^n-\Delta^D_{Y_i+2}\right)\oplus 2\leq-\Delta_{L_n+n}+\frac{\zeta}{2}V',\qquad\text{all }L_n+n\text{ eigenvalues}.
\end{equation*}
The first and last terms are different because of the boundary at $j=0$ and $j=L_n+n+1$, but they contribute at most $\bigo(\ln(n))$ eigenvalues, so they will disappear in the limit. By counting eigenvalues less ($<$) than $x\leq4$ we get
\begin{equation}\label{<*}
    I_{p,\zeta}(x)\leq I_p^{<,*}(x),\qquad x\in[0,4].
\end{equation}
Putting together \eqref{<=} and \eqref{<*} we obtain
\begin{align*}
    I_p^{\leq}(x)&\leq\,I_{p,\zeta}(x)\leq I_p^{<,*}(x)&x\in[0,4],\\
    p\sum_{k=1}^\infty (1-p)^{\ceil{k\beta(x)}-1}&\leq\,I_{p,\zeta}(x)\leq p\sum_{k=1}^\infty (1-p)^{\floor{k\beta(x)}-1}&x\in(0,4).
\end{align*}
which is \eqref{B1}. 

We have actually shown for $\zeta\geq4$
\begin{equation}\label{B3}
    I_p^{\leq}(x)\leq\,I_{p,\zeta}(x)\leq\min\{ I_p^{<,N}(x),\,I_p^{<,*}(x)\},\qquad x\in[0,4],
\end{equation}
which is equivalent to Theorem 1 since $I_p^{<,*}\leq I_p^{<,N}$ on $[0,2]$ and $I_p^{<,N}\leq I_p^{<,*}$ on $[2,4]$. We prefer the statement of Theorem 1 since it makes evident the appearance of the increasing sequence of special energies.
\end{proof}

\begin{proof}[Proof of Corollary 4]
First we remark that the IDS of the free Laplacian is $I_{0,0}(x)=I_{0,\zeta}(x)=\frac{1}{\beta(x)}$ for $x\in(0,4]$. We also recall that \eqref{B1} holds on $(0,4)$ and therefore so does \eqref{IUC}. Since $I_{p,\zeta}(x)$ is monotone decreasing in $p$ we have
\begin{equation*}
f(p,x)\coloneqq p\frac{(1-p)^{\beta(x)}}{1-(1-p)^{\beta(x)}}\leq I_{p,\zeta}(x)\leq I_{0,\zeta}(x)=\frac{1}{\beta(x)},\qquad x\in(0,4].
\end{equation*}
Notice $I_{0,0}(0)-I_{p,\zeta}(0)=0$ and $I_{0,0}(4)-I_{p,\zeta}(4)=p$, hence, we can avoid the boundary points and just show
\begin{equation*}
\sup_{x\in(0,4)}\left(\frac{1}{\beta(x)}- f(p,x)\right)\leq p.
\end{equation*}
An application of L'Hôpital's rule gives $\lim_{p\to 0}f(p,x)=\frac{1}{\beta(x)}$, which implies 
\begin{equation*}
\frac{1}{\beta(x)}- f(p,x)=-\int_0^p\frac{\partial f}{\partial t}(t,x)\dd{t}\leq p\sup_{\substack{t\in(0,1)\\y\in(0,4)}}\abs{-\frac{\partial f}{\partial t}(t,y)},\qquad x\in(0,4).
\end{equation*}
In the open set $(t,y)\in(0,1)\times(0,4)$ we have
\begin{equation*}
-\frac{\partial f}{\partial t}(t,y)=\frac{(1-t)^{\beta(y) -1} \left[(1-t)^{\beta(y) +1}-(1-t(\beta(y)+1))\right]}{\left(1-(1-t)^{\beta(y)}\right)^2}.
\end{equation*}
Using the binomial inequality we see $-\frac{\partial f}{\partial t}(t,y)\geq0$. Moreover, basic algebraic manipulation shows 
\begin{equation*}
-\frac{\partial f}{\partial t}(t,y)>1\iff 0>1-(1-t)^{\beta(y)-1}[1+t(\beta(y)-1)],
\end{equation*}
but
\begin{equation*}
    1-(1-t)^{\beta(y)-1}[1+t(\beta(y)-1)]=\beta(y)(\beta(y)-1)\int_0^t(1-s)^{\beta(y)-2}s\dd{s}\geq0.
\end{equation*}
Hence we conclude $-\frac{\partial f}{\partial t}(t,y)\leq1$.
\end{proof}

\section{Final Remarks}
In the proof of Theorem 1 we only used the hypothesis $\zeta\geq4$ for \eqref{<*}, where we needed $\frac{\zeta}{2}+1\geq3$ to obtain the Laplacians $-\Delta^D_{Y_i+2}$. In fact, for $\zeta\geq0$ \eqref{<=} holds, and instead of \eqref{<N} we could have written
\begin{equation}\label{<N2}
I_{p,\zeta}(x)\leq I_p^{<,N}(x),\qquad x\in[0,\min\{\zeta,4\}].
\end{equation}
From \eqref{<=} and \eqref{<N2} evaluated at $4-x\leq\zeta$, after subtracting $1-p$ and using \eqref{ET1} and \eqref{ET2}, we arrive at:
\begin{Corollary}
$n\in\N\setminus\{1\},\,\,\zeta\geq4-\beta^{-1}(n)\implies I_{p,\zeta}(4-\beta^{-1}(n))=(1-p)-\frac{p(1-p)^n}{1-(1-p)^n}$.
\end{Corollary}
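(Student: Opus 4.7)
The plan is to reverse-engineer the proof of Theorem 1 under the weaker hypothesis $\zeta \geq 4-\beta^{-1}(n)$. First I would revisit the derivation of \eqref{<N} to confirm the claim announced before \eqref{<N2}: the full hypothesis $\zeta\geq 4$ is actually not needed there, only $x\leq\zeta$, since its sole purpose is to guarantee that the direct summand $\zeta\id_n$ appearing in the Neumann bracketing contributes no eigenvalue strictly below $x$. This upgrades \eqref{<N} to \eqref{<N2} for every $\zeta\geq 0$. The lower bound \eqref{<=} already required no hypothesis on $\zeta$.

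Next I would evaluate \eqref{<=} and \eqref{<N2} at $x = 4-\beta^{-1}(n)$; note that for $n\geq 2$ this energy lies in $[2,4)$, and the application of \eqref{<N2} is legitimate precisely under the standing hypothesis $\zeta\geq 4-\beta^{-1}(n)$. Subtracting the resulting sandwich from $1-p$ and invoking the symmetry identities \eqref{ET1} and \eqref{ET2} transforms it into
\begin{equation*}
I_p^{\leq,D}(\beta^{-1}(n)) \;\leq\; (1-p)-I_{p,\zeta}(4-\beta^{-1}(n)) \;\leq\; I_p^<(\beta^{-1}(n)).
\end{equation*}

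Finally, since $\beta(\beta^{-1}(n))=n\in\N$, one has $\floor{kn}=\ceil{kn}=kn$ for all $k\geq 1$, so the explicit formulas for $I_p^{\leq,D}$ and $I_p^<$ computed in Section II both collapse to the same geometric series
\begin{equation*}
p\sum_{k=1}^\infty (1-p)^{kn} \;=\; \frac{p(1-p)^n}{1-(1-p)^n}.
\end{equation*}
The sandwich becomes an equality, yielding the stated closed form. I do not expect any serious obstacle: the content is essentially a careful auditing of which pieces of the Theorem 1 argument survive the relaxed hypothesis, combined with the collapse of the floor/ceiling at integer values of $\beta(x)$ that already powers Corollary 2.
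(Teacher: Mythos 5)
Your proposal is correct and matches the paper's own route: relax \eqref{<N} to \eqref{<N2} by noting that $\zeta\id_n$ contributes no eigenvalues strictly below $x$ as soon as $x\leq\zeta$, combine with \eqref{<=} via \eqref{ET1} and \eqref{ET2} to get the two-sided bound at $\beta^{-1}(n)$, and let the integrality of $\beta(\beta^{-1}(n))=n$ collapse the floor and ceiling so the sandwich closes. This is exactly the argument sketched at the start of Sec.~IV, just written out more explicitly.
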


This corollary says the function $\zeta\longmapsto I_{p,\zeta}(4-\beta^{-1}(n))$ is constant on $[4-\beta^{-1}(n),\infty)$, not just on $[4,\infty)$ as Corollary 2 stated. It is natural to conjecture the same behavior for the decreasing sequence
\begin{Conjecture}$n\in\N\setminus\{1\},\,\,\zeta\geq\beta^{-1}(n)\implies I_{p,\zeta}(\beta^{-1}(n))=\frac{p}{(1-p)}\frac{(1-p)^n}{1-(1-p)^n}$,
\end{Conjecture}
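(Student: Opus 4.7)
My plan is to establish the missing upper bound $I_{p,\zeta}(\beta^{-1}(n)) \leq \frac{p(1-p)^{n-1}}{1-(1-p)^n}$ for $\zeta \geq \beta^{-1}(n)$; the matching lower bound is already given by \eqref{<=} together with the identity $I_p^{\leq}(\beta^{-1}(n)) = \frac{p(1-p)^{n-1}}{1-(1-p)^n}$. The case $n = 2$ is in fact immediate: a direct computation shows $I_p^{<,N}(\beta^{-1}(2)) = \frac{1-p}{2-p} = I_p^{\leq}(\beta^{-1}(2))$, so \eqref{<N2} supplies the matching upper bound as soon as $\zeta \geq \beta^{-1}(2) = 2$. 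Only $n \geq 3$ is substantive, since for $n \geq 3$ one has $I_p^{<,N}(\beta^{-1}(n)) > I_p^{\leq}(\beta^{-1}(n))$, and the tight bound $I_p^{<,*}$ used in the paper requires $\zeta \geq 4$.

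The natural first attempt is to reuse the doubling construction from the proof of \eqref{<*}. The operator inequality $-\Delta_{L_n+n} + \frac{\zeta}{2}V' \leq H_{p,\zeta}\restrict{L_n}$ holds for every $\zeta \geq 0$, and Neumann decoupling produces a direct sum of blocks $-\Delta^N_{Y_i+2} + \frac{\zeta}{2}A_{Y_i+2}$. Their secular equation factors neatly as
\begin{equation*}
\operatorname{Im}\bigl[e^{i(Y_i+1)\theta}(\tfrac{\zeta}{2}-1+e^{i\theta})^2\bigr] = 0,
\end{equation*}
and counting its roots in $(0,\pi/n)$ at $\zeta = \beta^{-1}(n)$ shows that the block contributes $\lfloor(Y_i+1)/n\rfloor$ eigenvalues below $\beta^{-1}(n)$ when $Y_i + 1 \equiv 0 \pmod n$ (the ``resonant'' case), but $\lfloor(Y_i+1)/n\rfloor + 1$ (one too many) otherwise. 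Since resonance has probability exactly $I_p^{\leq}(\beta^{-1}(n))$ by a direct computation of geometric tails, the doubled-plus-Neumann bound overshoots $I_p^{<,*}$ by $p(1 - I_p^{\leq}(\beta^{-1}(n)))$ in the limit and therefore cannot yield the conjectured value for $\zeta < 4$.

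This overshoot suggests that no purely block-by-block decoupling will work; the proof must exploit a \emph{global} cancellation across adjacent blocks. My most promising attempt is to reduce the problem, via a Schur complement on the $1$-sites, to a tridiagonal eigenvalue problem whose size is the number of $1$'s in the finite volume. Since in each $0$-chunk the ambient transfer matrix at energy $\beta^{-1}(n)$ is exact rotation by $\pi/n$, the Schur complement admits a closed form in terms of the residues $Y_i \bmod n$, with poles precisely at the resonances $Y_i + 1 \equiv 0 \pmod n$. The eigenvalues of $H_{p,\zeta}$ strictly below $\beta^{-1}(n)$, minus the trivial resonant contributions which already saturate $I_p^{\leq}(\beta^{-1}(n))$ via Cauchy interlacing, should then be controlled by a Sturm oscillation analysis of the reduced matrix, whose entries depend only on $\zeta$ and on $\{Y_i \bmod n\}$.

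The main obstacle lies in the resonant chunks, where the Schur complement has a pole and the reduction must be performed carefully (for instance by freezing the resonant eigenmode before inverting, splitting off a rank-one correction). A further subtlety is the parabolic endpoint $\zeta = \beta^{-1}(n)$, at which the transfer matrix across a $1$-site acquires a Jordan block and the induced entries of the reduced tridiagonal must be analyzed by a perturbative argument in $\zeta - \beta^{-1}(n)$. In parallel I would attempt a transfer-matrix / rotation-number computation directly on $H_{p,\zeta}$: the product $\cdots T_1 T_0^{Y_2} T_1 T_0^{Y_1}$ has explicit form since $T_0^{Y_i}$ is given by Chebyshev polynomials at $\pi/n$, and phase-averaging against the geometric law of $Y_i$ may yield a rotation number that is manifestly independent of $\zeta$ on $[\beta^{-1}(n),\infty)$; the parabolic degeneration is once again the delicate point.
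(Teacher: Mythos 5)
The statement you were asked to prove is \emph{Conjecture~A}, which the paper leaves open: the author explicitly writes ``To prove Conjecture A we would need to handle arbitrarily small $\zeta$ which seems impossible with our methods.'' There is therefore no paper proof to compare against, and your proposal is not a proof either---it is a research plan that, to its credit, names its own unresolved obstacles. Your preliminary observations are correct and illuminating: the $n=2$ case does follow from \eqref{<N2}, because $\beta^{-1}(2)=2$ and $I_p^{<,N}(2)=\tfrac{1-p}{2-p}=I_p^{\leq}(2)$; for $n\geq 3$ one checks $I_p^{<,N}(\beta^{-1}(n))=\tfrac{p(1-p)}{1-(1-p)^n}>\tfrac{p(1-p)^{n-1}}{1-(1-p)^n}=I_p^{\leq}(\beta^{-1}(n))$, so the plain Neumann bound is no longer tight; and the doubling construction needs $\tfrac{\zeta}{2}+1\geq 3$ to dominate $-\Delta^D_{Y_i+2}$, forcing $\zeta\geq 4$. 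Your eigenvalue count showing that doubled-plus-Neumann overshoots $I_p^{<,*}$ for $\zeta<4$ by a term proportional to the non-resonance probability gives a concrete explanation of why any purely block-by-block decoupling cannot reach the conjectured value.

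However, neither the Schur-complement reduction to a tridiagonal problem on the $1$-sites nor the transfer-matrix/rotation-number computation is actually carried out. You flag two open issues---the pole of the Schur complement at resonant chunks $Y_i+1\equiv 0\pmod n$, and the Jordan-block degeneration of the $1$-site transfer matrix at the parabolic endpoint $\zeta=\beta^{-1}(n)$---and neither is resolved in your sketch; you call them ``the main obstacle'' and ``the delicate point'' yourself. Until those are handled, there is no argument here establishing the conjectured equality for $n\geq 3$ and $\beta^{-1}(n)\leq\zeta<4$. The gap you acknowledge is real, and it coincides with the one the paper's author acknowledges by stating the result as a conjecture rather than a theorem.
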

or at least the weaker
\begin{Conjecture}$\zeta\geq2\implies I_{p,\zeta}(\beta^{-1}(n))=\frac{p}{(1-p)}\frac{(1-p)^n}{1-(1-p)^n}$ for all $n\in\N\setminus\{1\}$.
\end{Conjecture}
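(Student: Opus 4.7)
My plan is to combine monotonicity in $\zeta$ with a refinement of the second upper bound used in the proof of Theorem~1. Since $V_p\geq 0$, the operator $H_{p,\zeta}$ is non-decreasing in $\zeta$, hence each eigenvalue of its finite-volume restriction is non-decreasing in $\zeta$ and therefore $\zeta\mapsto I_{p,\zeta}(x)$ is non-increasing. Combined with \eqref{<=}, which gives $I_p^\leq(\beta^{-1}(n))=\frac{p(1-p)^{n-1}}{1-(1-p)^n}\leq I_{p,\zeta}(\beta^{-1}(n))$ for every $\zeta\geq 0$, Conjecture~B reduces to the single inequality
\begin{equation*}
I_{p,2}(\beta^{-1}(n))\leq \frac{p(1-p)^{n-1}}{1-(1-p)^n},\qquad n\in\N\setminus\{1\}.
\end{equation*}

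As noted at the start of this section, the doubling construction producing $\bigoplus-\Delta^D_{Y_i+2}$ requires $\zeta\geq 4$: the key inequality $-\Delta^N_m+\tfrac{\zeta}{2}A_m\geq -\Delta^D_m$ fails for $\zeta<4$. Replaying that construction at $\zeta=2$ produces blocks $-\Delta_{Y_i+2}$, whose number of eigenvalues strictly below $\beta^{-1}(n)$ is $\floor{(Y_i+2)/n}$, and yields only $I_{p,2}(\beta^{-1}(n))\leq\frac{p(1-p)^{n-2}}{1-(1-p)^n}$, overshooting the target by the multiplicative factor $\tfrac{1}{1-p}$. A sharper construction is therefore required.

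I would attempt a Feshbach--Schur reduction onto the zero-sites. Decomposing $\ell^2(\{1,\ldots,L\})=\H_0\oplus\H_1$ according to $\{V_p=0\}$ and $\{V_p=1\}$, for any $x<\zeta$ the one-site block $A_{11}-x$ is invertible and the eigenvalue count of the finite-volume restriction of $H_{p,\zeta}$ below $x$ is faithfully encoded in the energy-dependent effective operator $A_{00}-A_{01}(A_{11}-x)^{-1}A_{10}$ acting on $\H_0$; the boundary case $x=\zeta$ (relevant only for $n=2$) is handled by continuity of $I_{p,\zeta}$. On a generic zero block of length $Y_i$ flanked by isolated one-sites, this effective operator restricts to $-\Delta_{Y_i}-\tfrac{1}{2+\zeta-x}(e_1e_1^{T}+e_{Y_i}e_{Y_i}^{T})$, with further non-local couplings whenever clusters of consecutive one-sites appear. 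At $x=\beta^{-1}(n)\leq 2$ and $\zeta=2$ the coefficient $\tfrac{1}{2+\zeta-x}\leq \tfrac{1}{2}$, which I expect to be exactly the threshold preserving the count $\floor{(Y_i+1)/n}$ of eigenvalues of $-\Delta_{Y_i}$ not exceeding $\beta^{-1}(n)$ on each generic block.

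The hard part will be controlling the contribution of clusters of consecutive one-sites: their Schur complements are resolvents of tridiagonal matrices of size $\geq 2$ and produce non-local couplings that oscillate in sign. A promising way forward is to combine the Feshbach reduction with the transfer-matrix formalism, since at $x=\beta^{-1}(n)$ the matrix $T_0$ at a zero site is conjugate to a rotation of angle $\pi/n$ (hence of order $2n$) while $T_1$ remains hyperbolic whenever $\zeta\geq 2$; the product of rotations and hyperbolic shears along a realization should admit an explicit combinatorial eigenvalue-counting formula. Pushing the same analysis to the sharper threshold $\zeta\geq\beta^{-1}(n)$ would simultaneously prove Conjecture~A, but identifying $\zeta=2$ (respectively $\zeta=\beta^{-1}(n)$) as the exact transition, rather than some slightly larger constant, appears to be the genuine technical obstacle.
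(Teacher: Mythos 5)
This statement is labeled \emph{Conjecture B} in the paper, and the paper does not prove it; it only remarks, at the end of Section IV, why the doubling technique from the proof of Theorem~1 fails at $\zeta=2$. Your proposal is therefore correctly calibrated in tone: it is a proof plan, not a proof, and you say so yourself. The parts of your write-up that constitute actual mathematics coincide exactly with the paper's own remark. The reduction via monotonicity of $\zeta\mapsto I_{p,\zeta}$ to the single case $\zeta=2$, the observation that $-\Delta_m^N+\tfrac{\zeta}{2}A_m\geq-\Delta_m^D$ requires $\zeta\geq4$, the identification of the surviving blocks as $-\Delta_{Y_i+2}$ with eigenvalue count $\lfloor(Y_i+2)/n\rfloor$ below $\beta^{-1}(n)$, and the resulting bound $I_{p,2}(\beta^{-1}(n))\leq\tfrac{p(1-p)^{n-2}}{1-(1-p)^n}$ overshooting the target by a factor $\tfrac{1}{1-p}$ are all accurate and are precisely what the paper observes.

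What remains is entirely speculative, and you acknowledge the genuine gaps. Two concrete issues with the Feshbach--Schur plan as stated. First, even for \emph{isolated} one-sites the Schur complement $A_{00}-A_{01}(A_{11}-x)^{-1}A_{10}$ is not block diagonal: a single one-site at $L_k$ contributes not only the diagonal corrections $-\tfrac{1}{2+\zeta-x}$ at $L_k\pm1$ but also an off-diagonal entry $-\tfrac{1}{2+\zeta-x}$ coupling $L_k-1$ to $L_k+1$, i.e.\ the end of one zero block to the start of the next. So the non-local couplings are not confined to clusters of one-sites; they are present in every realization, and one cannot simply count eigenvalues block by block. Second, even restricting to the diagonal correction, the claim that $\tfrac{1}{2+\zeta-x}\leq\tfrac12$ is ``exactly the threshold preserving the count $\lfloor(Y_i+1)/n\rfloor$'' is asserted, not argued; the perturbation $-\tfrac12 A_{Y_i}$ sits strictly between the free and Neumann Laplacians and its spectrum is not one of the explicitly diagonalizable families used in the paper, so establishing the threshold would itself require a new computation. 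The transfer-matrix reformulation you mention (elliptic $T_0$ of order $2n$ at $x=\beta^{-1}(n)$, hyperbolic $T_1$) is a sensible direction and consistent with the spirit of Ref.~\onlinecite{Schulz}, but as presented it is a research programme rather than a proof. In short: your diagnosis matches the paper's, and your plan is reasonable, but neither you nor the paper closes the gap.
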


To prove Conjecture A we would need to handle arbitrarily small $\zeta$ which seems impossible with our methods. For Conjecture B, it would be sufficient to do the case $\zeta=2$ since $I_{p,\zeta}(x)$ is monotone decreasing in $\zeta$. Following the same steps of Sec. III, we could get to
\begin{equation*}
    \left(Y_1+1\text{ term} \right)\oplus\left(\bigoplus_{i=2}^n-\Delta_{Y_i+2}\right)\oplus 1\leq -\Delta_{L_n+n}+2V_p,\qquad\text{first }L_n\text{ eigenvalues},
\end{equation*}
where the IDS of the left-most side can be explicitly computed, but is strictly greater than $I_{p}^\leq$ and therefore not enough to conclude.

As a last comment, we remark that the results obtained in this article can be extended to any 1D Anderson model with a positive potential distribution as long as $\PPrb{V=0}>0$ and $[0,4]$ is separated from the rest of the spectrum. In such a case Theorem 1 and its Corollaries hold once we replace $p\mapsto1-\PPrb{V=0}$. In particular, in $[0,4]$ we will find the same sequences of special energies at which the IDS is explicit.

\begin{acknowledgments}
This work has benefitted from support provided by the University of Strasbourg Institute for Advanced Study (USIAS), within the French national programme “Investment for the future” (IdEx-Unistra). The author thanks the referees for their careful reading and valuable suggestions.
\end{acknowledgments}

\section*{Data Availability}
Data sharing is not applicable to this article as no new data were created or analyzed in this study.

%

\end{document}